\definecolor{myblue}{rgb}{0 0.4470 0.7410}
\definecolor{mypurple}{rgb}{0.4940 0.1840 0.5560}
\definecolor{mygreen}{rgb}{0.4660 0.6740 0.1880}
\newtheorem{prob}{Problem}
\newtheorem{theorem}{Theorem}
\newtheorem{assumption}{Assumption}
\newtheorem{rem}{Remark}
\newtheorem{ex}{Example}
\theoremstyle{remark}
\pgfplotsset{compat=newest}
\pgfplotsset{plot coordinates/math parser=false}
\newlength\figureheight
\newlength\figurewidth
\newcommand\norm[1]{\left\lVert#1\right\rVert}
\title{\LARGE \bf
Simulator-Driven Deceptive Control \\ via Path Integral Approach
}
\author{Apurva Patil$^{1,*}$, Mustafa O. Karabag$^{2,*}$, Takashi Tanaka$^{3}$, Ufuk Topcu$^{3}$\thanks{$^{*}$ Indicates equal contribution. $^{1}$Walker Department of Mechanical Engineering, University of Texas at Austin, {\tt\small apurvapatil@utexas.edu}.
$^{2}$Department of Electrical and Computer Engineering, University of Texas at Austin, {\tt\small karabag@utexas.edu}.
$^{3}$Department of Aerospace Engineering and Engineering Mechanics, University of Texas at Austin, {\tt\small ttanaka@utexas.edu}, {\tt\small utopcu@utexas.edu
}.}
}
\begin{document}

\maketitle
\thispagestyle{empty}
\pagestyle{empty}

\begin{abstract}
We consider a setting where a supervisor delegates an agent to perform a certain control task, while the agent is incentivized to deviate from the given policy to achieve its own goal. In this work, we synthesize the optimal deceptive policies for an agent who attempts to hide its deviations from the supervisor's policy. We study the deception problem in the continuous-state discrete-time stochastic dynamics setting and, using motivations from hypothesis testing theory, formulate a Kullback-Leibler control problem for the synthesis of deceptive policies. This problem can be solved using backward dynamic programming in principle, which suffers from the curse of dimensionality. However, under the assumption of deterministic state dynamics, we show that the optimal deceptive actions can be generated using path integral control. This allows the agent to numerically compute the deceptive actions via Monte Carlo simulations. Since Monte Carlo simulations can be efficiently parallelized, our approach allows the agent to generate deceptive control actions online. We show that the proposed simulation-driven control approach asymptotically converges to the optimal control distribution. 


\end{abstract}

\section{Introduction}

We consider a deception problem between a supervisor and an agent. The supervisor delegates an agent to perform a certain task and provides a reference policy to be followed in a stochastic environment. The agent, on the other hand, aims to achieve a different task and may deviate from the reference policy to accomplish its own task. The agent uses a deceptive policy to hide its deviations from the reference policy. In this work, we synthesize the optimal policies for such a deceptive agent.

We formulate the agent's deception problem using motivations from hypothesis testing theory. We assume that the supervisor aims to detect whether the agent deviated from the reference policy by observing the state-action paths of the agent. On the flip side, the agent's goal is to employ a deceptive policy that achieves the agent's task and minimizes the detection rate of the supervisor. We design the agent's deceptive policy that minimizes the Kullback-Leibler (KL) divergence from the reference policy while achieving the agent's task. The use of KL divergence is motivated by the log-likelihood ratio test, which is the most powerful detection test for any given significance level~\cite{cover1999elements}. Minimizing the KL divergence is equivalent to minimizing the expected log-likelihood ratio between distributions of the paths generated by the agent's deceptive policy and the reference policy. We also note that due to the Bratagnolle-Huber inequality~\cite{bretagnolle1978estimation}, for any statistical test employed by the supervisor, the sum of false positive and negative rates is lower bounded by a decreasing function of KL divergence between the agent's policy and the reference policy. Consequently, minimizing the KL divergence is a proxy for minimizing the detection rate of the supervisor. We represent the agent's task with a cost function and formulate the agent's objective function as a weighted sum of the cost function and the KL divergence.


We assume that the agent's environment follows discrete-time continuous-state dynamics. When the dynamics are linear, the supervisor's (stochastic) policies are Gaussian, and the cost functions are quadratic, minimizing a weighted sum of the cost function, and the KL divergence leads to solving a linear quadratic regulator problem. However, we consider a broader setting with potentially non-linear state dynamics, non-quadratic cost functions, and non-Gaussian reference policies. In this case, the agent's optimal deceptive policy does not necessarily admit a closed-form solution. While the agent's problem can be solved using backward dynamic programming, this approach suffers from the curse of dimensionality.

We show that, under the assumption of deterministic state dynamics, the optimal deceptive actions can be generated using path integral control without explicitly synthesizing a policy. In detail, we propose a two-step randomized algorithm for simulator-driven control for deception. At each time step, the algorithm first creates forward Monte Carlo samples of system paths under the reference policy. Then, the algorithm uses a cost-proportional weighted sampling method to generate a control input at that time step. We show that the proposed approach asymptotically converges to the optimal action distribution. Since Monte Carlo simulations can be efficiently parallelized, our approach allows the agent to generate the optimal deceptive actions online. 

The contributions of this paper are threefold:
1) The work studies a problem of deception under supervisory control for continuous-state discrete-time stochastic systems. Given a reference policy, we formalize the synthesis of an optimal deceptive policy as a KL control problem and solve it using backward dynamic programming. 
    2) For the deterministic state dynamics, we propose a path-integral-based solution methodology for simulator-driven control. We develop an algorithm based on Monte Carlo sampling to numerically compute the optimal deceptive actions online. Furthermore, we show that the proposed approach asymptotically converges to the optimal control distribution of the deceptive agent.
3)
We present a numerical example to validate the derived simulator-driven control synthesis framework.

\subsection{Related Work}

Deception naturally occurs in settings where two parties with conflicting objectives coexist. The example domains for deception include robotics~\cite{shim2013taxonomy,dragan2015deceptive}, supervisory control settings~\cite{karabag2021deception,karabag2022exploiting}, warfare~\cite{lloyd2003art}, and cyber systems~\cite{wang2018cyber}.

We formulate a deception problem motivated by hypothesis testing. This problem has been studied for fully observable Markov decision processes~\cite{karabag2021deception}, partially observable Markov decision processes~\cite{karabag2022exploiting}, and hidden Markov models~\cite{keroglou2018probabilistic}. Different from \cite{karabag2021deception,karabag2022exploiting,keroglou2018probabilistic} that study discrete-state systems and directly solve an optimization problem for the synthesis of deceptive policies, we consider a nonlinear continuous-state system and provide a sampling-based solution for the synthesis of deceptive policies. In the security framework,  \cite{kung2016performance,bai2017data} study the detectability of an attacker in a stochastic control setting. Similar to our formulation, \cite{kung2016performance,bai2017data} provide a KL divergence-based optimization problem. While we consider an agent whose goal is to optimize a different cost function from the supervisor, \cite{kung2016performance,bai2017data} consider an attacker whose goal is to maximize the state estimation error of a controller.

KL divergence objective is also used in reinforcement learning~\cite{schulman2015trust,filippi2010optimism} to improve the learning performance and in KL control frameworks~\cite{todorov2007linearly,ito2022kullback} for the efficient computation of optimal policies. In \cite{ito2022kullback}, Ito et al. studied the KL control problem for nonlinear continuous-state space systems and characterized the optimal policies. Different from \cite{ito2022kullback}, we provide a randomized control algorithm based on path integral approach that converges to the optimal policy as the number of samples increases. Path integral control is a sampling-based algorithm employed to solve nonlinear stochastic optimal control problems numerically ~\cite{kappen2005path, theodorou2010generalized, williams2016aggressive}. It allows the policy designer to compute the optimal control inputs online using Monte Carlo samples of system paths. The Monte Carlo simulations can be massively parallelized on GPUs, and thus the path integral approach is less susceptible to the curse of dimensionality \cite{williams2017model}.

\subsection{Notation}
Let $(\mathcal{X}, \mathcal{B}(\mathcal{X}))$ be a measurable space where $\mathcal{X}\subseteq\mathbb{R}^n$ is a Borel set and $\mathcal{B}(\mathcal{X})$ is a Borel $\sigma$-algebra. Suppose $(\Omega, \mathcal{F}, \mathcal{P})$ is a probability space. An $( \mathcal{F}, \mathcal{B}(\mathcal{X}))$-measurable random variable $X$ is a function $X:\Omega\rightarrow\mathcal{X}$ whose probability distribution $P_X$ is defined by 
\begin{equation*}
  P_X(B) = \mathcal{P}(X\in B)\quad \forall B \in \mathcal{B}(\mathcal{X})
\end{equation*}
$P_{X_2|X_1}(\cdot|\cdot):\mathcal{B}(\mathcal{X}_2) \times \mathcal{X}_1 \rightarrow [0,1]$ represents a stochastic kernel on $\mathcal{X}_2$ given $\mathcal{X}_1$. For simplicity, we write $P_X(dx)$ and $P_{X_2|X_1}(dx_2|x_1)$ as $P(dx)$ and $P(dx_2|x_1)$. If $P_1$ and $P_2$ are probability distributions on $(\mathcal{X}, \mathcal{B}(\mathcal{X}))$ then, the Kullback-Leibler (KL) divergence from $P_1$ to $P_2$ is defined as
\begin{equation*}\label{eq: KL divergence def}
	D(P_2\|P_1)=\int_\mathcal{X} \log\frac{dP_2}{dP_1}(x)P_2(dx)
\end{equation*}
if the Radon-Nikodym derivative $\frac{dP_2}{dP_1}$ exists, and $D(P_2\|P_1)=+\infty$ otherwise. Throughout this paper, we use the natural logarithm. Let $\mathcal{T}=\{0, 1, ... , T\}$ be the set of discrete time indices. A set of variables $\{x_0, x_1, \hdots, x_T\}$ is denoted by $x_{0:T}$ and a Cartesian product of sets $\mathcal{X}_0\times\mathcal{X}_1\times\hdots\times\mathcal{X}_T$ is denoted by $\mathcal{X}_{0:T}$. $P_{X_{0:T}}(dx_{0:T})$ denotes the joint probability distribution of random variables $X_0, X_1, \hdots, X_T$ on $\left(\mathcal{X}_{0:T}, \mathcal{B}(\mathcal{X}_{0:T})\right)$.

\section{Problem Formulation}\label{Sec: KL Control}
   We consider a setting in which a supervisor contracts an agent to perform a certain task. Suppose the agent operates in a stochastic environment and follows discrete-time continuous-state dynamics. Let the state transition law of the agent be denoted by $P(dx_{t+1}|x_t, u_t):\mathcal{B}(\mathcal{X}_{t+1}) \times \mathcal{X}_{t} \times \mathcal{U}_{t} \rightarrow [0,1]$, where the random variables $X_t\in\mathcal{X}_t$ and $U_t\in\mathcal{U}_t$ represent the state and the control input of the system at time $t\in\mathcal{T}$. $\mathcal{X}_t$ and $\mathcal{U}_t$ are assumed to be Euclidean spaces with appropriate dimensions. Suppose a supervisor provides a (possibly stochastic) reference policy $\{R_{U_t|X_t}(\cdot|x_t)\}_{t=0}^{T-1}$ to the agent and expects the agent to follow the policy to accomplish a certain task. Here, $R_{U_t|X_t}:\mathcal{B}(\mathcal{U}_t) \times \mathcal{X}_t \rightarrow [0,1]$ is a stochastic kernel on $\mathcal{U}_t$ given $\mathcal{X}_t$. The agent, on the other hand, aims to achieve a different task by minimizing the following cost function, which we henceforth call as \textit{path cost}:
\begin{equation}\label{eq:path cost}
   C_{0:T}(x_{0:T}, u_{0:T-1}) \coloneqq \sum_{t=0}^{T-1} C_t(x_t, u_t)+C_T(x_T)
\end{equation}
where $C_t(\cdot, \cdot):\mathcal{X}_t\times \mathcal{U}_t\rightarrow \mathbb{R}$ for $t\in\mathcal{T}$ and $C_T(\cdot): \mathcal{X}_T\rightarrow \mathbb{R}$ represent the stage costs and the terminal cost, respectively. In order to minimize the path cost \eqref{eq:path cost}, the agent designs its policy (possibly stochastic) $\{Q_{U_t|X_t}(\cdot|x_t)\}_{t=0}^{T-1}$ that may deviate from the reference policy $\{R_{U_t|X_t}(\cdot|x_t)\}_{t=0}^{T-1}$. The agent also attempts to be stealthy to hide its deviations from the supervisor. While the agent executes its policy $Q$, suppose the supervisor observes its state-action paths $\{x_{0:T}, u_{0:T-1}\}$, and uses a likelihood ratio test to detect whether the agent deviates from the reference policy. According to the Neyman–Pearson lemma, the likelihood-ratio test is optimal among all simple hypothesis tests for a given significance level~\cite{cover1999elements}. In other words, we consider the worst-case scenario for the agent to be detected by the supervisor. Suppose the initial state $X_0 = x_0$ of the agent is known. We denote the joint probability distribution of the state-action paths induced via the reference policy by
\begin{align}\label{eq:def_ref_traj_dist}
 R_{X_{0:T}\times U_{0:T-1}}& (dx_{0:T}\times du_{0:T-1})\nonumber\\ 
    &=\prod_{t=0}^{T-1}\!\! P(dx_{t+1}|x_t, u_t)R(du_t|x_t), 
\end{align}
and the joint distribution induced via the agent's policy by 
\begin{align}\label{eq:def_dc_traj_dist}
    Q_{X_{0:T}\times U_{0:T-1}}& (dx_{0:T}\times du_{0:T-1})\nonumber\\ 
    &=\prod_{t=0}^{T-1}\!\! P(dx_{t+1}|x_t, u_t)Q(du_t|x_t). 
\end{align}
Given a path $\{x_{0:T}, u_{0:T-1}\}$ that is randomly sampled under the agent's policy, the supervisor computes the log-likelihood ratio (LLR) 
\begin{equation}
    \pi(x_{0:T}, u_{0:T-1})=\log \frac{dQ_{X_{0:T}\times U_{0:T-1}}}{dR_{X_{0:T}\times U_{0:T-1}}}(x_{0:T}, u_{0:T-1}) \label{eq:LLR}.
\end{equation} The supervisor decides that the agent uses the reference policy $R$ if $\pi(x_{0:T}, u_{0:T-1})\leq c$, and deviates from $R$ otherwise. Here $c$ is a constant chosen by the supervisor to obtain a specified significance level. An agent not wanting to be detected by the supervisor must minimize the LLR \eqref{eq:LLR}. However, since the agent's trajectories are stochastic, the agent cannot directly minimize the LLR. We consequently consider that the agent's goal is to minimize the expected LLR as follows:
\begin{equation}\label{eq:ELLR}
   \Pi = \mathbb{E}_Q\left[\log \frac{dQ_{X_{0:T}\times U_{0:T-1}}}{dR_{X_{0:T}\times U_{0:T-1}}}(x_{0:T}, u_{0:T-1})\right]
\end{equation} 
where $\mathbb{E}_Q[\cdot]$ represents the expectation with respect to the probability distribution $Q$ \eqref{eq:def_dc_traj_dist}. Note that equation \eqref{eq:ELLR} also defines the Kullback-Leibler (KL) divergence $D(Q \|R)$ between the agent's distribution $Q$  and the reference distribution $R$. It can be shown that $D(Q \|R)$ can be written as the stage-additive KL divergence between $Q_{U_t|X_t}$ and $R_{U_t|X_t}$ as follows (see Appendix A):
\begin{align}\label{eq: KL cost}
   \!\! D(Q \|R)\! =\! \mathbb{E}_Q\!\left[\sum_{t=0}^{T-1} D(Q_{U_t|X_t}(\cdot | X_t) \|R_{U_t|X_t}(\cdot | X_t))\right].
\end{align}
Since the KL divergence $ D(Q \|R)$ is equivalent to the expected LLR \eqref{eq:ELLR}, in this work, the KL divergence is used as a proxy for the measure of the agent's deviations from the reference policy. 

Minimizing the KL divergence is in fact equivalent to minimizing the detection rate of an attacker for an ergodic process as proved in \cite{bai2017data}. While we do not consider an ergodic process, the use of KL divergence is still well-motivated by the Bretagnolle–Huber inequality~\cite{bretagnolle1978estimation}. Let \(\mathcal{E}\) be an arbitrary set of events that the supervisor will identify the agent as a deceptive agent, i.e., the agent followed \(Q\). According to the Bretagnolle–Huber inequality, we have
 \begin{equation}\label{eq: BH inequlity}
     \Pr(\mathcal{E}|R) + \Pr(\neg \mathcal{E}|Q) \geq \frac{1}{2} \exp(-D(Q||R))
 \end{equation}
 where $\Pr(\mathcal{E}|R)$ and $\Pr(\neg \mathcal{E}|Q)$ denote the supervisor's false positive and negative rates, respectively. The false positive rate is the probability that the supervisor will identify the well-intentioned agent as a deceptive agent, i.e., the agent's policy is \(R\), but the supervisor thinks that the agent has followed \(Q\). Similarly, the false negative rate is the probability that the supervisor will identify the deceptive agent as a well-intentioned agent. The Bretagnolle–Huber inequality \eqref{eq: BH inequlity} states that the sum of the supervisor's false positive and negative rates is lower bounded by a decreasing function of the KL divergence between the distributions \(Q\) and \(R\). 
Therefore, an agent wanting to increase the supervisor's false classification rate should minimize the KL divergence from $R$ to $Q$.\par

The goal of the agent is to design a deceptive policy $Q$ that minimizes the expected path cost $\mathbb{E}_Q\left[C_{0:T}(X_{0:T}, U_{0:T-1})\right]$ \eqref{eq:path cost} while limiting the KL divergence $D\left(Q\|R\right)$ \eqref{eq: KL cost}. Using \eqref{eq:path cost} and \eqref{eq: KL cost}, we propose the following KL control problem for the synthesis of optimal deceptive policies for the agent: 

\begin{prob}[Synthesis of optimal deceptive policy]\label{Prob:KL control}
\begin{align}\label{eq:prob_KL_dc1_fixed_end}
&\min_{\{Q_{U_t|X_t}\}_{t=0}^{T-1}} \mathbb{E}_Q \sum_{t=0}^{T-1}  \Big\{ C_t(X_t, U_t) \\
& +\lambda D(Q_{U_t|X_t}(\cdot | X_t) \|R_{U_t|X_t}(\cdot | X_t))\Big\} + \mathbb{E}_Q C_T(X_T)\nonumber
\end{align}
where $\lambda$ is a positive weighting factor that balances the trade-off between the KL divergence and the path cost.
\end{prob}
 We explain the above KL control problem via the following example:  
\begin{ex}
Consider a drone that is contracted by a supervisor to perform a surveillance task over an area. The supervisor prefers the drone to operate at high speeds (policy \(R\)) to improve the efficiency of the surveillance. The operator of the drone, the agent, on the other hand, prefers the drone to operate in a battery-saving, safe mode (policy \(Q\)) to improve the longevity of the drone. The agent does not want to get detected by the supervisor and fired. Hence, the goal of the agent is to operate in a way that would balance the energy consumption (\(\mathbb{E}_Q\left[ C_{0:T}(X_{0:T}, U_{0:T-1})\right]\)) and the deviations from the behavior desired by the supervisor (\(D(Q||R)\)).
\end{ex}

\section{Synthesis of Optimal Deceptive Policies}

In this section, we solve Problem \ref{Prob:KL control} using backward dynamic programming and propose a policy synthesis algorithm based on path integral control. 

\subsection{Backward Dynamic Programming}
Notice that the cost function of Problem \ref{Prob:KL control} possesses the time-additive Bellman structure and, therefore, can be solved by utilizing the principle of dynamic programming \cite{bertsekas2012dynamic}. Define for each $t\in\mathcal{T}$ and $x_t\in\mathcal{X}_t$, the value function:
\begin{align}\label{eq:value function}
&J_t(x_t):=\inf_{\{Q_{U_k|X_k}\}_{k=t}^{T-1}} \mathbb{E}_Q \sum_{k=t}^{T-1}  \Big\{ C_k(X_k, U_k)  \\
& \quad +\lambda D(Q_{U_k|X_k}(\cdot | X_k) \|R_{U_k|X_k}(\cdot | X_k))\Big\} + \mathbb{E}_Q C_T(X_T).\nonumber
\end{align}
Notice that in \eqref{eq:value function}, we used ``inf" instead of ``min" since we do not know if the infimum is attained. In the following theorem, we show that the infimum is indeed attained, and therefore, ``inf" can be replaced by ``min". 
\begin{theorem}\label{thrm:Bellman recursion}
The value function $J_t(x_t)$ satisfies the following backward Bellman recursion with the terminal condition $J_T(x_T)=C_T(x_T)$:
\begin{align}\label{eq:v_bellman} 
J_t(x_t)=& -\lambda \log\Bigg\{ \int_{\mathcal{U}_t}\exp\left(-\frac{C_t(x_t, u_t)}{\lambda}\right)\\ 
&\!\!\!\!\!\!\!\!\!\!\!\!\!\!\!\!\!\!\!\!\!\!\!\!\times\exp\left(-\frac{1}{\lambda}\int_{\mathcal{X}_{t+1}}\!\!\!\!\!\!J_{t+1}(x_{t+1})P(dx_{t+1}|x_t, u_t)\right) R(du_t|x_t)\Bigg\}\nonumber
\end{align}
and the minimizer of Problem \ref{Prob:KL control} is given by
\begin{equation}
\label{eq:p_star_dc}
Q_{U_t|X_t}^*(B_{U_t}|x_t)\!=\!\frac{\int_{B_{U_t}}\!\!\!\exp(-\rho_t(x_t, u_t)/\lambda)R(du_t|x_t)}{\int_{\mathcal{U}_t} \exp(-\rho_t(x_t, u_t)/\lambda)R(du_t|x_t)}
\end{equation}
where 
\begin{equation}\label{eq:rho_dc_def}
\!\!\!\rho_t(x_t, u_t)\!:=\! C_t(x_t, u_t)+\!\!\int_{\mathcal{X}_{t+1}} \!\!\!\!\!\!\!\!\! J_{t+1}(x_{t+1})P(dx_{t+1}|x_t, u_t)
\end{equation}
and $B_{U_t}$ is a Borel set belonging to the $\sigma-$algebra $\mathcal{B}(\mathcal{U}_t)$.
\end{theorem}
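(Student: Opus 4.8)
The plan is to prove the claim by backward induction on $t$, using the principle of optimality to reduce the full-horizon infimum \eqref{eq:value function} to a single-stage KL-regularized minimization, and then solving that minimization in closed form via a Gibbs variational argument. The terminal case $t=T$ is immediate: the sum in \eqref{eq:value function} is empty, so $J_T(x_T)=\mathbb{E}_Q C_T(X_T)=C_T(x_T)$, which matches the stated terminal condition.

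For the inductive step, I would first invoke the time-additive Bellman structure to write, assuming the future policies are chosen optimally,
\begin{align*}
J_t(x_t)=\inf_{Q_{U_t|X_t}}\Bigg\{ &\int_{\mathcal{U}_t}\rho_t(x_t,u_t)\,Q(du_t|x_t) \\
&+\lambda D\big(Q_{U_t|X_t}(\cdot|x_t)\,\|\,R_{U_t|X_t}(\cdot|x_t)\big)\Bigg\},
\end{align*}
where $\rho_t$ is exactly \eqref{eq:rho_dc_def} and the appearance of $J_{t+1}$ inside $\rho_t$ is justified by the induction hypothesis. Establishing this decomposition rigorously --- interchanging the stage-$t$ infimum with the infima over the future stages in a continuous-state setting --- is where I expect the measure-theoretic bookkeeping to be heaviest: one must argue that $J_{t+1}$ is measurable so that $\rho_t$ is well defined, and that the stage-wise minimizers are measurable in $x_t$ so that they concatenate into an admissible policy.

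The heart of the proof is then the single-stage problem $\inf_Q \int \rho\, dQ + \lambda D(Q\|R)$ over probability measures $Q$ on $\mathcal{U}_t$. I would solve it by the standard completion-of-the-square / Gibbs argument: define the candidate $Q^*$ as in \eqref{eq:p_star_dc} with normalizer $Z_t(x_t)=\int_{\mathcal{U}_t}\exp(-\rho_t(x_t,u_t)/\lambda)\,R(du_t|x_t)$, and use the identity
\begin{equation*}
\int \rho\, dQ + \lambda D(Q\|R) = \lambda D(Q\|Q^*) - \lambda \log Z_t(x_t),
\end{equation*}
valid for every $Q$ absolutely continuous with respect to $R$ (any $Q$ failing this has $D(Q\|R)=+\infty$ and cannot be optimal). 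Since $D(Q\|Q^*)\ge 0$ with equality if and only if $Q=Q^*$, the infimum equals $-\lambda\log Z_t(x_t)$, which is precisely \eqref{eq:v_bellman}, and it is attained uniquely at $Q^*$. This simultaneously identifies the minimizer as \eqref{eq:p_star_dc} and justifies replacing ``inf'' by ``min.''

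The remaining care points are the integrability conditions ensuring $0 < Z_t(x_t) < \infty$, so that $Q^*$ is a well-defined probability measure and the logarithm is finite, together with the measurable-selection step that closes the induction. These are routine under mild boundedness assumptions on the stage costs but should be stated explicitly rather than glossed over.
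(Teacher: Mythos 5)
Your proposal matches the paper's proof in all essentials: both reduce the problem via Bellman's optimality principle to the single-stage minimization of $\int_{\mathcal{U}_t}\rho_t\,dQ+\lambda D(Q\|R)$ and identify the Gibbs-type minimizer \eqref{eq:p_star_dc} with optimal value $-\lambda\log Z_t(x_t)$. The only difference is cosmetic --- the paper cites the Legendre duality between free energy and relative entropy (its Appendix B) where you prove that same fact inline via the completion-of-the-square identity $\int\rho\,dQ+\lambda D(Q\|R)=\lambda D(Q\|Q^*)-\lambda\log Z_t$, and you are more explicit about the measurability and integrability caveats that the paper leaves implicit.
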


\begin{proof}
    See Appendix C.
\end{proof}
Theorem \ref{thrm:Bellman recursion} provides a recursive method to compute the value functions $J_t(x_t)$ and optimal control distributions $Q^*_{U_t|X_t}$ backward in time. As one can see, to perform the backward recursions \eqref{eq:v_bellman} and \eqref{eq:p_star_dc}, the function $J_t(x_t)$ must be evaluated everywhere in the continuous domain $\mathcal{X}_t$. Therefore, in practice, an exact implementation of backward dynamic programming is computationally costly (unless the problem has a special structure, for example, linear state dynamics and quadratic costs). The computational cost grows quickly with the dimension of the state space of the system, which is referred to as the \textit{curse of dimensionality}. In the next section (Section \ref{sec:PI_control}), we show that under the assumption of the deterministic state transition law, the backward Bellman recursions can be linearized. This allows us to design a simulator-driven algorithm to compute optimal deceptive actions.

\subsection{Simulator-Driven Control via Path Integral Approach}\label{sec:PI_control}
In this section, we focus on a special case in which the agent's dynamics are deterministic and propose a simulator-driven algorithm to compute the optimal deceptive actions via path integral control. 
\begin{assumption}\label{assump: deterministic law}
The state transition law is governed by a deterministic mapping $F_t:\mathcal{X}_t\times\mathcal{U}_t\rightarrow\mathcal{X}_{t+1}$ as
\begin{equation}\label{eq:deter state transition law}
    x_{t+1} = F_t(x_t, u_t);
\end{equation}
that is, $P(dx_{t+1}|x_t,u_t) = \delta_{F_t(x_t, u_t)}(dx_{t+1})$, where $\delta$ denotes the Dirac measure.
\end{assumption}

\begin{rem}
Note that under Assumption \ref{assump: deterministic law}, the agent can deviate from the reference policy \(R\) only if it is stochastic. Otherwise, under any deviations from the reference policy, with a positive probability, the supervisor will be sure that the agent did not follow the reference policy. Therefore, in what follows, we consider the reference policy to be stochastic. The stochasticity of the reference policy could be to account for the unmodeled elements of the dynamics, to provide robustness, or to encourage exploration. 
\end{rem}

\begin{rem}
Consider a special setting in which the state dynamics $F_t(x_t, u_t)$ is linear in $x_t$ and $u_t$, the reference policy distribution $R_{U_t|X_t}(\cdot | x_t)$ is Gaussian, and the cost functions $C_t(\cdot,\cdot)$ and $C_T(\cdot)$ are quadratic in $x_t$ and $u_t$. In such a setting, it can be shown that the optimal deceptive policy $Q^*_{U_t|X_t}(\cdot | x_t)$ is also Gaussian and can be analytically computed by backward Riccati recursions similar to the standard Linear-Quadratic-Regular (LQR) problems. In this work, we consider a broader setting with possibly non-Gaussian reference distribution, non-linear state dynamics, and non-quadratic cost functions. In this case, the optimal deceptive policy might not be efficiently computed by solving backward recursions.
\end{rem}

Now, we propose a path-integral-based solution approach for simulator-driven policy synthesis. Under assumption \ref{assump: deterministic law}, we can rewrite \eqref{eq:v_bellman} as
\begin{align}\label{eq:v_bellman_post}
J_t(x_t)=& -\lambda \log\Bigg\{ \int_{\mathcal{U}_t}\exp\left(-\frac{C_t(x_t, u_t)}{\lambda}\right)\\ 
&\!\!\!\!\!\!\!\!\!\!\!\!\!\!\!\!\!\!\!\!\!\!\!\!\times\exp\left(-\frac{J_{t+1}\left(F_t(x_t,u_t)\right)}{\lambda}\right)R(du_t|x_t)\Bigg\}.\nonumber
\end{align}
We introduce the exponentiated value function as $Z_t(x_t):=\exp\left(-\frac{1}{\lambda}J_t(x_t)\right)$. Using $Z_t(x_t)$, the Bellman recursion \eqref{eq:v_bellman_post} can be linearized, and we get the following linear relationship between $Z_t$ and $Z_{t+1}$:
\begin{align}
\! Z_t(x_t)\!=\!&\int_{\mathcal{U}_t}\!\!\!\exp\left(-\frac{C_t(x_t, u_t)}{\lambda}\right) Z_{t+1}\left(F_t(x_t, u_t)\right)R(du_t|x_t)\nonumber\\
=&\int_{\mathcal{U}_t}\int_{\mathcal{X}_{t+1}}\exp\left(-\frac{C_t(x_t, u_t)}{\lambda}\right) Z_{t+1}(x_{t+1})\label{eq:z_recursive}\\
&\hspace{15mm}\times P(dx_{t+1}|x_t, u_t)R(du_t|x_t).\nonumber
\end{align}
Note that in \eqref{eq:z_recursive}, $P(dx_{t+1}|x_t,u_t) = \delta_{F_t(x_t, u_t)}(dx_{t+1})$ by Assumption \ref{assump: deterministic law}. 
Equation \eqref{eq:z_recursive} is a linear backward recursion in \(Z_t\). The linear solvability of the KL control problem is well-known in the literature (e.g., \cite{todorov2007linearly}). We remark that linearizability critically relies on Assumption \ref{assump: deterministic law}\footnote{We remark that in prior works where the path integral method is used to solve stochastic control problems, a certain assumption (e.g. Eq. (9) in \cite{kappen2005path}) is made to reinterpret the original problem as a problem of designing the optimal randomized policy for a deterministic transition system.}.
Now, by recursive substitution, \eqref{eq:z_recursive} can also be written as
\begin{align*}
Z_t(x_t)=&
\int_{\mathcal{U}_t}\int_{\mathcal{X}_{t+1}}\cdots 
\int_{\mathcal{U}_{T-1}}\int_{\mathcal{X}_T}
\exp\left(-\frac{C_t(x_t, u_t)}{\lambda}\right)  \nonumber \\
&\!\!\!\!\!\!\!\!\!\!\!\!\times \cdots \times \exp\left(-\frac{C_T(x_T)}{\lambda}\right)
R(dx_{t+1:T}\times du_{t:T-1}|x_t). 
\end{align*}
Thus, by introducing the path cost function \[C_{t:T}(x_{t:T}, u_{t:T-1}):=\sum_{k=t}^{T-1}C_k(x_k, u_k)+C_T(x_T),\]
we obtain
\begin{equation}
  Z_t(x_t)=\mathbb{E}_R \exp \left(-\frac{1}{\lambda}C_{t:T}(X_{t:T}, U_{t:T-1}) \right)  
 \label{eq:z_pi}
\end{equation}
where the expectation 
$\mathbb{E}_R(\cdot)$ is with respect to the probability measure $R$ \eqref{eq:def_ref_traj_dist}. Equation \eqref{eq:z_pi} expresses the exponentiated value function $Z_t(x_t)$ as the expected path cost under the reference distribution. This suggests a path-integral-based approach to numerically compute $Z_t(x_t)$. Suppose we generate a collection of $N$ independent samples of system paths $\{x_{t:T}(i), u_{t:T-1}(i)\}_{i=1}^N$ starting from $x_t$ under the reference distribution $R$. Since the reference distribution is known, a collection of such sample paths can be easily generated using a Monte Carlo simulation. If $C_{t:T}(x_{t:T}(i), u_{t:T-1}(i))$ represents the path cost of the sample path $i$, then by the strong law of large numbers \cite{durrett2019probability} as $N\rightarrow \infty$, we get 
\begin{equation}
\label{eq:mc_dc_phi_convergence}
 \!\! \!\!\frac{1}{N}\sum_{i=1}^N \exp\!\left(-\frac{1}{\lambda}C_{t:T}(x_{t:T}(i), u_{t:T-1}(i))\right) \overset{a.s.}{\rightarrow}   Z_t(x_t).
\end{equation}\par
Similarly, a collection of sample paths $\{x_{t:T}(i), u_{t:T-1}(i)\}_{i=1}^N$ starting from $x_t$ under the reference distribution $R$ can be used to sample $u_t$ from the optimal distribution $Q_{U_t|X_t}^*(\cdot|x_t)$.
Notice that the optimal deceptive policy \eqref{eq:p_star_dc} can be expressed in terms of $\{Z_t\}_{t=0}^{T}$ as 
\begin{subequations}\label{eq:p_star_dc_linear_path_cost}
\begin{align}
 Q_{U_t|X_t}^*(B_{U_t}|x_t) =&
\frac{1}{Z_t(x_t)}\int_{B_{U_t}}\exp\left(-\frac{C_t(x_t, u_t)}{\lambda}\right)\nonumber\\ 
&\!\!\!\!\!\!\!\!\times\exp\left(-\frac{J_{t+1}\left(F_t(x_t,u_t)\right)}{\lambda}\right)R(du_t|x_t)\nonumber\\
&\!\!\!\!\!\!\!\!\!\!\!\!\!\!\!\!\!\!\!\!\!\!\!\!\!\!\!\!\!\!\!\!\!\!\!\!\!\!\!\!\!\!\!\!\!\!\!\!\!=\frac{1}{Z_t(x_t)}\int_{B_{U_t}}\exp\left(-\frac{C_t(x_t, u_t)}{\lambda}\right)\nonumber\\ 
&\!\!\!\!\!\!\!\!\!\!\!\!\!\!\!\!\times Z_{t+1}\left(F_t(x_t, u_t)\right)R(du_t|x_t)\label{eq:p_star_dc_linear_path_cost1}\\
&\!\!\!\!\!\!\!\!\!\!\!\!\!\!\!\!\!\!\!\!\!\!\!\!\!\!\!\!\!\!\!\!\!\!\!\!\!\!\!\!\!\!\!\!\!\!\!\!\!=\!\frac{1}{Z_t(x_t)}\int_{B_{U_t}}\int_{\mathcal{X}_{t+1}}\!\!\!\!\exp\left(-\frac{C_t(x_t, u_t)}{\lambda}\right)Z_{t+1}(x_{t+1})  \nonumber \\
&\times P(dx_{t+1}|x_t, u_t)R(du_t|x_t) \label{eq:p_star_dc_linear_path_cost2} \\
&\!\!\!\!\!\!\!\!\!\!\!\!\!\!\!\!\!\!\!\!\!\!\!\!\!\!\!\!\!\!\!\!\!\!\!\!\!\!\!\!\!\!\!\!\!\!\!\!\!=\!\frac{1}{Z_t(x_t)}\!\!\int_{\!\{\mathcal{X}_{t+1:T},\,\mathcal{U}_{t:T-1}|u_t\in B_{U_t}\}} \!\!\!\!\!\!\! \exp\!\left(\!-\frac{C_{t:T}(x_{t:T}, u_{t:T-1})}{\lambda}\!\right) \nonumber \\
&\hspace{3ex}\times R(dx_{t+1:T}\times du_{t:T-1}|x_t).
\label{eq:p_star_dc_linear_path_cost3}
\end{align}
\end{subequations}
The step \eqref{eq:p_star_dc_linear_path_cost1} follows from the definition of $Z_t$. In \eqref{eq:p_star_dc_linear_path_cost2}, we used our assumption $P(dx_{t+1}|x_t,u_t) = \delta_{F_t(x_t, u_t)}(dx_{t+1})$. Finally, \eqref{eq:p_star_dc_linear_path_cost3} is obtained by the recursive substitution of \eqref{eq:p_star_dc_linear_path_cost2}, and $\{\mathcal{X}_{t+1:T},\,\mathcal{U}_{t:T-1}|u_t\in B_{U_t}\}$ represents a collection paths such that $u_t\in B_{U_t}$. \par
We use the above representation of \(Q^{*}_{U_t|X_t}\) to sample an action \(u_{t}\) from it. Let $r_t(i)$ be the exponentiated path cost of the sample path $i$:
\begin{equation}
\label{eq:dc_path_reward}
r_t(i):= \exp\left(-\frac{1}{\lambda}C_{t:T}(x_{t:T}(i), u_{t:T-1}(i))\right)
\end{equation}
and $r_t:=\sum_{i=1}^N r_t(i)$.
For each $t\in\mathcal{T}$, we introduce a piecewise constant, monotonically non-decreasing function $F_t: [0,N]\rightarrow [0,r_t]$ by
\[
F_t(x)=\sum_{i=1}^{\lfloor x \rfloor} r_t(i).
\]
where $\lfloor x \rfloor$ denotes $\mathrm{floor}(x)$, i.e., the greatest integer less than or equal to $x$. The function $F_t(x)$ is represented in Figure \ref{fig:f_func}.

\begin{figure}[t]
{\includegraphics[width=\columnwidth]{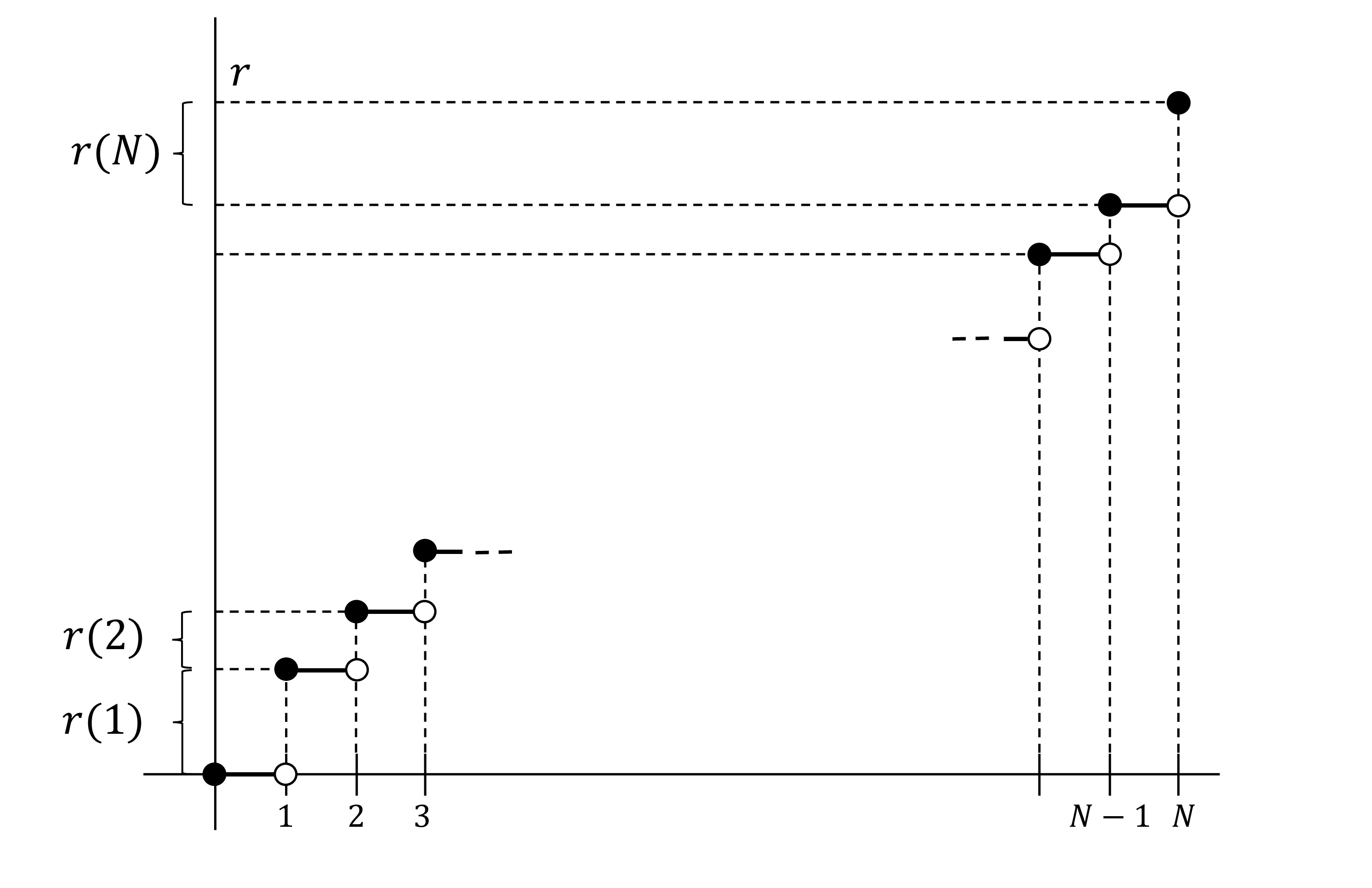}};
    \caption{Function $F_{t}(x)$.}
    \label{fig:f_func}
\end{figure}

\begin{algorithm}[t]\label{Algo: Q*}
\caption{Sampling $u_t$ approximately from $Q_{U_t|X_t}^*(\cdot|x_t)$ by Monte Carlo simulations}
\KwData{Initial state $x_0$}
\For {$t\in\mathcal{T}$}{
Sample $N$ paths $\{x_{t:T}(i), u_{t:T-1}(i)\}_{i=1}^N$ starting from $x_t$ under the reference distribution $R$. \label{alg:randomsamplepath}
\\
For each sample path $i$, compute the exponentiated path cost $r_t(i)$ by \eqref{eq:dc_path_reward}.\label{alg:reward of a sample path}\\
Compute $r_t:=\sum_{i=1}^N r_t(i)$.\label{alg:total reward}\\
Generate a random variable $d$ according to $d \sim \mathrm{unif}[0,r_t]$. \label{alg:randomoutputpath}
\\
Select a sample ID by $j_t\leftarrow F_t^{-1}(d)$.  \label{alg:sample ID}
\\ 
Select a control input as $u_t\leftarrow u_t(j_t)$.
}
\end{algorithm}

Notice that the inverse $F_t^{-1}$ of $F_t$ defines a mapping $F_t^{-1}: [0,r_t]\rightarrow \{1, 2, ... , N\}$.
To generate a sample $u_t$ approximately from the optimal distribution $Q_{U_t|X_t}^*$, we propose Algorithm \ref{Algo: Q*}.
We first, generate a random variable $d$ according to $d \sim \text{unif}[0,r_t]$. Then, we select a sample ID by $j_t\leftarrow F_t^{-1}(d)$. Finally, the control input adopted in the $j_t$-th sample path at time step $t$ is selected as $u_t$, i.e., $u_t\leftarrow u_t(j_t)$. Theorem \ref{theorem:lln} proves that as the number of Monte Carlo samples tends to infinity, Algorithm 1 samples $u_t$ from the optimal distribution $Q_{U_t|X_t}^*(\cdot|x_t)$.

\begin{theorem}
\label{theorem:lln}
Let $B_{U_t}\in\mathcal{B}(\mathcal{U}_t)$ be a Borel set. Suppose for a given collection of sample paths $\{x_{t:T}(i), u_{t:T-1}(i)\}_{i=1}^N$, $u_t$ is computed by Algorithm 1 and the probability of $u_t\in B_{U_t}$ is denoted by $\Pr\{u_t \in B_{U_t} |\{x_{t:T}(i),  u_{t:T-1}(i)\}_{i=1}^N \}$. Then, as $N \rightarrow \infty$
\[
\Pr\{u_t \in B_{U_t} |\{x_{t:T}(i),  u_{t:T-1}(i)\}_{i=1}^N \} \overset{a.s.}{\rightarrow} Q^*_{U_t|X_t}(B_{U_t}|x_t). 
\]
\end{theorem}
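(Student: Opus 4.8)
The plan is to first compute exactly the conditional distribution induced by Algorithm 1 for a fixed collection of sample paths, and then apply the strong law of large numbers to its numerator and denominator separately.

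First I would show that the uniform draw $d \sim \mathrm{unif}[0,r_t]$ together with the map $d \mapsto F_t^{-1}(d)$ realizes a cost-proportional categorical distribution over sample IDs. Writing the cumulative sums $S_j := \sum_{i=1}^{j} r_t(i)$ (so $S_0 = 0$ and $S_N = r_t$), the step function $F_t$ jumps from $S_{j-1}$ to $S_j$ at $x=j$, hence $F_t^{-1}(d) = j$ exactly when $d \in (S_{j-1}, S_j]$. Since $d$ is uniform, the event $\{j_t = j\}$ has conditional probability $(S_j - S_{j-1})/r_t = r_t(j)/r_t$. Because $u_t$ is assigned the time-$t$ control $u_t(j_t)$ of the selected path, this gives, for a fixed collection of paths,
\[
\Pr\{u_t \in B_{U_t} \mid \{x_{t:T}(i), u_{t:T-1}(i)\}_{i=1}^N\} = \frac{\sum_{i=1}^N \mathbbm{1}[u_t(i) \in B_{U_t}]\, r_t(i)}{\sum_{i=1}^N r_t(i)}.
\]

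Next I would divide numerator and denominator by $N$ and invoke the strong law of large numbers. The paths $\{x_{t:T}(i), u_{t:T-1}(i)\}$ are i.i.d.\ draws from the reference distribution $R$ started at $x_t$, and $r_t(i) = \exp(-C_{t:T}/\lambda)$ is a measurable function of a single path. Hence $\frac{1}{N}\sum_i r_t(i) \overset{a.s.}{\to} Z_t(x_t)$ by \eqref{eq:mc_dc_phi_convergence}, while
\[
\frac{1}{N}\sum_{i=1}^N \mathbbm{1}[u_t(i) \in B_{U_t}]\, r_t(i) \overset{a.s.}{\to} \mathbb{E}_R\left[\mathbbm{1}[U_t \in B_{U_t}]\exp\left(-\tfrac{1}{\lambda}C_{t:T}(X_{t:T}, U_{t:T-1})\right)\right].
\]
By \eqref{eq:p_star_dc_linear_path_cost3}, this latter expectation is exactly the integral defining $Z_t(x_t)\, Q^*_{U_t|X_t}(B_{U_t}|x_t)$. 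Taking the ratio of the two almost-sure limits cancels $Z_t(x_t)$ and yields $Q^*_{U_t|X_t}(B_{U_t}|x_t)$, as claimed.

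The main obstacle I anticipate is not the limit argument itself but the careful bookkeeping of the two independent sources of randomness: the conditional probability in the statement is taken over the uniform draw $d$ given fixed sample paths, yet it is itself a random variable through its dependence on the sampled paths, and it is this random variable that must be shown to converge almost surely as $N \to \infty$. I would also confirm the integrability condition $\mathbb{E}_R[r_t] < \infty$ required by the strong law (guaranteed, e.g., if the stage costs are bounded below so that $r_t \leq 1$ up to a constant factor), and note that the ratio-of-limits step needs the denominator limit $Z_t(x_t)$ to be strictly positive in order to rule out a $0/0$ indeterminacy; this holds whenever the path cost is finite, since $r_t(i) > 0$.
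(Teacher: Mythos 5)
Your proposal is correct and follows essentially the same route as the paper's proof in Appendix D: express the conditional probability as the ratio $r_{B_{U_t}}/r_t$ of cost-weighted sums, divide both by $N$, apply the strong law of large numbers to each, and identify the limits via \eqref{eq:p_star_dc_linear_path_cost3}. Your additional remarks on the inverse-CDF sampling step, the integrability of $r_t(i)$, and the strict positivity of $Z_t(x_t)$ are sensible elaborations of details the paper leaves implicit.
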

\begin{proof}
    See Appendix D 
\end{proof}
\vspace{1mm}

We showed that under Assumption \ref{assump: deterministic law}, the optimal deceptive policies can be synthesized using path integral control. Algorithm \ref{Algo: Q*} allows the deceptive agent to numerically compute optimal actions via Monte Carlo simulations without explicitly synthesizing the policy. Since Monte Carlo simulations can be efficiently parallelized, the agent can generate the optimal control actions online.

\section{Numerical Example}
\begin{figure*}
     \centering
       \!\!\!\!\!\!\!\!\!\!\!\!\!\!\!\begin{tabular}{c c}
       \includegraphics[scale=0.75]{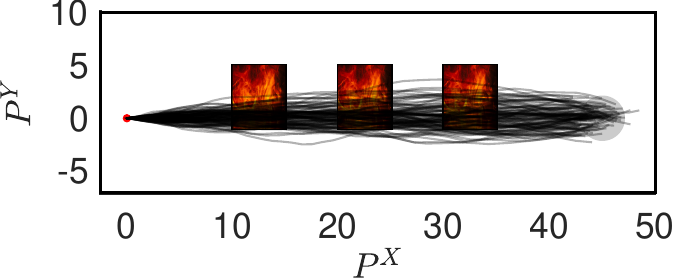} &\includegraphics[scale=0.75]{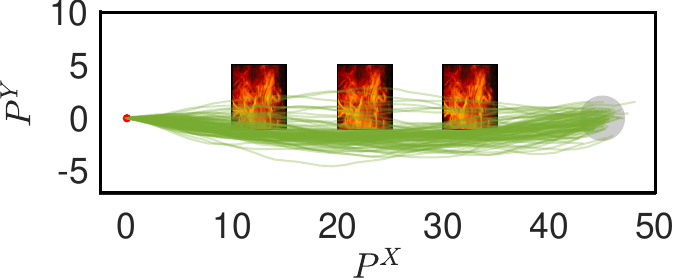} \\
       (a) Paths under $R$, $\Pr^{\mathrm{safe}} = 0.04$  & (b) Paths under $\widehat{Q}^*$ with $\lambda=3$, $\Pr^{\mathrm{safe}} = 0.48$ \\
       \vspace{1mm}\\
      \includegraphics[scale=0.75]{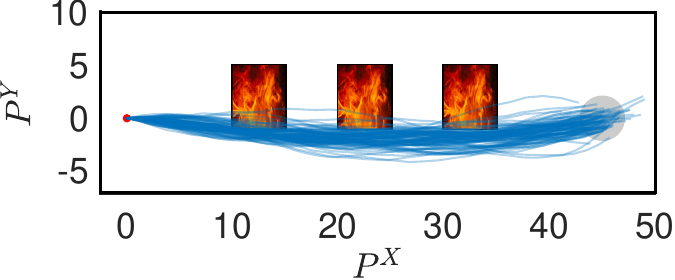} &\includegraphics[scale=0.75]{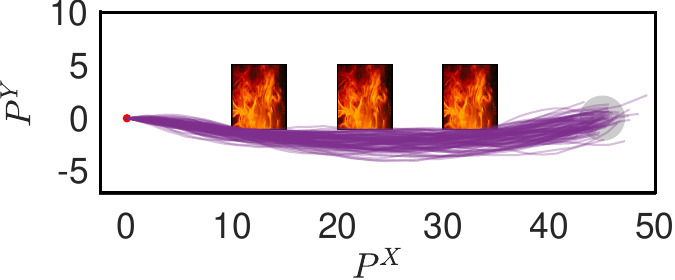} \\
       (c) Paths under $\widehat{Q}^*$ with $\lambda=2$, $\Pr^{\mathrm{safe}} = 0.62$  & (d) Paths under $\widehat{Q}^*$ with $\lambda=0.5$, $\Pr^{\mathrm{safe}} = 0.94$ \\
       \end{tabular}
         \caption{A unicycle navigation problem. The start position is shown by a red dot, and the goal region by a disk colored in gray. $100$ sample paths generated under the reference policy $R$ and the agent's policy $\widehat{Q}^*$ with three values of $\lambda$ are shown. The probability of safe paths $\Pr^{\mathrm{safe}}$ are noted below each case.} 
         \label{Fig. simulation trajs}
 \end{figure*}
In this section, we validate the path-integral-based algorithm proposed to generate optimal deceptive control actions. The problem is illustrated in Figure \ref{Fig. simulation trajs}. A supervisor wants an agent to start from the origin and reach a disk of radius $G^R$ centered at $\begin{bmatrix} G^X & G^Y\end{bmatrix}^\top$ (shown in gray color) as fast as possible. The supervisor also expects the agent to inspect the region on the way. To encourage exploration and to provide robustness against unmodeled dynamics, the supervisor designs a randomized reference policy. The agent, on the other hand, wishes to avoid the regions on the way that are covered under fire, as shown in Figure \ref{Fig. simulation trajs}. Let these regions be represented collectively by $\mathcal{X}^{\mathrm{fire}}$. Suppose the agent's dynamics are modeled by a unicycle model as:
\begin{align*}
    P^X_{t+1} &= P^X_{t+1} + S_t\cos\Theta_t h\\
    P^Y_{t+1} &= P^Y_{t+1} + S_t\sin\Theta_t h\\
    S_{t+1} & = S_{t} + A_t h\\
    \Theta_{t+1} & = \Theta_{t} + \Omega_t h
\end{align*}
where $(P^X_t, P^Y_t)$, $S_t$, and $\Theta_t$ denote the $x-y$ position, speed, and the heading angle of the agent at time step $t$, respectively. The control input $U_t\coloneqq\begin{bmatrix}A_t & \Omega_t\end{bmatrix}^T$ consists of acceleration $A_t$ and angular speed $\Omega_t$. $h$ is the time discretization parameter used for discretizing the continuous-time unicycle model. For this simulation study, we set $h=1$. Note that the agent's dynamics is deterministic as per Assumption \ref{assump: deterministic law}; however, the control input $U_t$ can be stochastic. Suppose the supervisor designs the reference policy $R$ as a Gaussian probability density with mean $\overline{u}_t$ and covariance $\Sigma_t$:
\begin{equation*}
  R_{U_t|X_t}(\cdot|x_t) = \frac{\exp\left[-\frac{1}{2}(u_t-\overline{u}_t)^\top\Sigma_t^{-1}(u_t-\overline{u}_t)\right]}{\sqrt{(2\pi)^2|\Sigma_t|}}.
\end{equation*}
The mean $\overline{u}_t\coloneqq\begin{bmatrix}\overline{a}_t & \overline{\omega}_t \end{bmatrix}^\top$ is designed using a proportional controller as 
\begin{equation*}
    \overline{A}_t = -k_{A} (S_t - S_t^{\mathrm{desired}}),\quad
    \overline{\Omega}_t = -k_{\Omega} (\Theta_t - \Theta_t^{\mathrm{desired}}) 
\end{equation*}
where $k_{A}$ and $k_{\Omega}$ are proportional gains and $S_t^{\mathrm{desired}}$, $\Theta_t^{\mathrm{desired}}$ are computed as 
\begin{equation*}
  S_t^{\mathrm{desired}} \!=\! \frac{\norm{\begin{bmatrix}G^X\\G^Y\end{bmatrix}\!  -\! \begin{bmatrix}P^X_t\\P^Y_t\end{bmatrix}}}{T-t} ,\;
\Theta_t^{\mathrm{desired}}\!  =\! \tan^{\!-\!1}\!\!\left(\!\frac{G^Y-P^Y_t}{G^X-P^X_t}\!\right).  
\end{equation*}
As mentioned before, the agent wishes to avoid the region $\mathcal{X}^{\mathrm{fire}}$. Suppose the cost function $C_{0:T}$ is designed as
\begin{equation*}
   C_{0:T}(X_{0:T}, U_{0:T-1}) = \sum_{t=0}^{T}\mathds{1}_{[P^X_t\;P^Y_t]^{^\top}\in\mathcal{X}^{\mathrm{fire}}}
\end{equation*}
where $\mathds{1}_{[P^X_t\;P^Y_t]^{^\top}\in\mathcal{X}^{\mathrm{fire}}}$ represents an indicator function that returns $1$ when the agent is inside the region $\mathcal{X}^{\mathrm{fire}}$ and $0$ otherwise.
For this simulation, we set
\begin{equation*}
   \begin{bmatrix}G^X\\
    G^Y\end{bmatrix}\!\! =\!\! \begin{bmatrix}45\\
    0\end{bmatrix}\!\!,\, \Sigma_t \!\!=\!\! \begin{bmatrix}0.5 & \!\!\!\!0\\
    0 & \!\!\!\!0.5\end{bmatrix}\!\!,\, k_A\! =\! 0.1, \, k_\Omega\! =\! 0.2, \, T\! =\! 50.
\end{equation*}
The agent chooses its action at each time step using Algorithm \ref{Algo: Q*} where the number of samples is $N=10^5$. \par
 
Suppose $\widehat{Q}^*$ denotes the deceptive agent's distribution generated by the sampling-based Algorithm 1. Figure \ref{Fig. simulation trajs} shows $100$ paths under the reference distribution R (Figure \ref{Fig. simulation trajs}(a)) and the agent's distribution $\widehat{Q}^*$ for three values of $\lambda$ (Figure \ref{Fig. simulation trajs}(b) - \ref{Fig. simulation trajs}(d)). A lower value of $\lambda$ implies that the agent cares less about its deviation from the reference policy and more about avoiding the region $\mathcal{X}^{\mathrm{fire}}$. A higher value of $\lambda$ implies the opposite. We also report $\Pr^{\mathrm{safe}}$, the percentage of paths that avoid $\mathcal{X}^{\mathrm{fire}}$. Under the reference distribution $R$, only $4\%$ of the paths are safe. On the other hand, more paths are safe under the agent's distribution $\widehat{Q}^*$, and as the value of $\lambda$ reduces, $\Pr^{\mathrm{safe}}$ increases.

 \par
 Figure \ref{Fig. LLR} shows the expected log-likelihood ratio (with one standard deviation) with respect to time $t$ for three values of $\lambda$. The expected LLR is computed as follows. Algorithm \ref{Algo: Q*} selects a control input $u_k\leftarrow u_k(j_k)$ at time step $k$, where $j_k$ is a sample ID obtained from step \ref{alg:randomoutputpath}. From the construction of the algorithm, at each time step $k$, the probability of choosing the control input $u_k\leftarrow u_k(j_k)$ under the agent's distribution $\widehat{Q}^*$ is $r_k(j_k)/r_k$, where $r_k(j_k)$ and $r_k$ are computed by steps \ref{alg:reward of a sample path} and \ref{alg:total reward} of Algorithm \ref{Algo: Q*}. Whereas the probability of choosing the control input $u_k\leftarrow u_k(j_k)$ under the reference distribution is $1/N$. Therefore, using \eqref{eq:llr and KL}, the expected LLR upto time $t\in\mathcal{T}$ can be approximately computed as
\begin{align*}
     &\mathbb{E}_{Q^*}\left[\log \frac{dQ^*_{X_{0:t}\times U_{0:t-1}}}{dR_{X_{0:t}\times U_{0:t-1}}}(x_{0:t}, u_{0:t-1})\right]\nonumber\\
     =& \mathbb{E}_{Q^*}\!\!\! \left[\sum_{k=0}^{t-1}\log\frac{dQ^*_{U_k|X_k}}{dR_{U_k|X_k}}\left(x_k, u_k\right)\right]\!\!\approx\!\frac{1}{N_{\widehat{Q}^*}}\!\!\sum_{i=1}^{N_{\widehat{Q}^*}}\sum_{k=0}^{t-1} \frac{r_k(j_k)/r_k}{1/N}.
\end{align*}
where $N_{\widehat{Q}^*}$ is the number of paths generated by repeatedly running Algorithm \ref{Algo: Q*}. Note that since we assume the system dynamics to be deterministic (Assumption \ref{assump: deterministic law}), once the control input $u_k$ is chosen at time step $k$, the state $x_{k+1}$ is uniquely determined. Therefore, while computing the expected LLR, we only need to consider the probabilities of choosing the control input $u_k$ under policies $\widehat{Q}^*$ and $R$. Figure \ref{Fig. LLR} shows that for a lower value of $\lambda$, the expected LLR is higher, i.e., more deviation of $\widehat{Q}^*$ from $R$.

\begin{figure}
     \centering
      \input{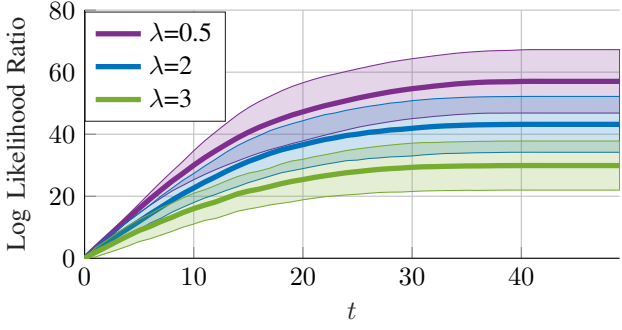}
         \caption{Expected LLR (with one standard deviation) with respect to time $t$ for three values of $\lambda$.} 
         \label{Fig. LLR}
 \end{figure}

\section{Conclusion}
We presented a deception problem under supervisory control for continuous-state discrete-time stochastic systems. Using motivations from hypothesis testing theory, we formalized the synthesis of an optimal deceptive policy as a KL control problem and solved it using backward dynamic programming. Since the dynamic programming approach suffers from the curse of dimensionality, we proposed a simulator-driven algorithm to compute optimal deceptive actions via path
integral control. The proposed approach allows the agent to numerically compute deceptive actions online via Monte Carlo sampling of system paths. We validated the proposed approach via a numerical example with a nonlinear system.

For future work, we plan to study the deception problem for continuous-time stochastic systems. We also plan to conduct the sample complexity analysis of the path integral approach to solve KL control problems.




\appendix
\subsection{Proof of Equation \eqref{eq: KL cost}}
\vspace{-5mm}
\begin{subequations}\label{eq:RN as product} \allowdisplaybreaks
\begin{align}
   &\int_{B} Q_{X_{0:T}\times U_{0:T-1}}(dx_{0:T}\!\times\! du_{0:T-1})\nonumber\\
   =&\int_{B}\prod_{t=0}^{T-1}\!\! P(dx_{t+1}|x_t, u_t)Q(du_t|x_t)\label{eq:RN as product1}\\
   =&\int_{B}\!\!\left(\prod_{t=0}^{T-1}\frac{dQ_{U_t|X_t}}{dR_{U_t|X_t}}\!\left(x_t,u_t\right)\!\!\right)\!\!\prod_{t=0}^{T-1}\!\! P(dx_{t+1}|x_t, u_t)R(du_t|x_t)\label{eq:RN as product2}\\
   =&\!\!\!\int_{B}\!\!\left(\prod_{t=0}^{T-1}\frac{dQ_{U_t|X_t}}{dR_{U_t|X_t}}\!\left(x_t,u_t\right)\!\!\right)\!\!R_{X_{0:T}\times U_{0:T-1}}\!(dx_{0:T}\!\!\times\! du_{0:T-1})\label{eq:RN as product3}
\end{align}
\end{subequations}
where, $B$ is a Borel set belonging to the $\sigma-$algebra $\mathcal{B}\left(\mathcal{X}_{0:T}\times\mathcal{U}_{0:T-1}\right)$. The first equality \eqref{eq:RN as product1} follows from the definition \eqref{eq:def_dc_traj_dist}, the second equality \eqref{eq:RN as product2} by the definition of the Radon-Nikodym derivative \cite{durrett2019probability} and the last one \eqref{eq:RN as product3} from the definition \eqref{eq:def_ref_traj_dist}. Using \eqref{eq:RN as product}, we can write the Radon-Nikodym derivative $\frac{dQ_{X_{0:T}\times U_{0:T-1}}}{dR_{X_{0:T}\times U_{0:T-1}}}$ as follows:
\begin{equation}\label{eq: stagewise KL}
     \!\!\!\frac{dQ_{X_{0:T}\times U_{0:T-1}}}{dR_{X_{0:T}\times U_{0:T-1}}}\left(x_{0:T}, u_{0:T-1}\right) \!=\!\!\! \prod_{t=0}^{T-1} \frac{dQ_{U_t|X_t}}{dR_{U_t|X_t}}\left(x_t,u_t\right).
\end{equation}
Using \eqref{eq: stagewise KL}, we get the following:

\begin{align}
 D(Q\|P)=&\;\mathbb{E}_Q\left[\log \frac{dQ_{X_{0:T}\times U_{0:T-1}}}{dR_{X_{0:T}\times U_{0:T-1}}}(x_{0:T}, u_{0:T-1})\right]\nonumber\\
     =&\;\mathbb{E}_Q \left[\log \prod_{t=0}^{T-1} \frac{dQ_{U_t|X_t}}{dR_{U_t|X_t}}\left(x_t, u_t\right)\right]\nonumber\\
     =&\;\mathbb{E}_Q\left[\sum_{t=0}^{T-1}\log\frac{dQ_{U_t|X_t}}{dR_{U_t|X_t}}\left(x_t, u_t\right)\right]\label{eq:llr and KL}\\
     =&\;\mathbb{E}_Q\left[\sum_{t=0}^{T-1}D(Q_{U_t|X_t}(\cdot | X_t) \|R_{U_t|X_t}(\cdot | X_t))\right].\nonumber
\end{align}

\subsection{Legendre Duality}
\label{appendix:a}

Let $P$ and $Q$ be probability distributions on $(\mathcal{X}, \mathcal{B}(\mathcal{X}))$, and $C:\mathcal{X}\rightarrow \mathbb{R}$ a given cost function. Define the internal energy $U(P,C)$, free energy $F(R,C)$ and relative entropy (KL divergence) $D(P\|R)$ as:
\begin{align*}
&U(P,C):=\int_\mathcal{X} C(x)P(dx)  \\
&F(R,C):= -\lambda \log \int_\mathcal{X} \exp\left(-\frac{C(x)}{\lambda}\right)R(dx) \\
&D(P\|R):= \int_\mathcal{X} \log\frac{dP}{dR}(x)P(dx).
\end{align*}
Then the following duality relationship holds:
\begin{align*}
F(R,C)&=\inf_P \{ U(P,C)+\lambda D(P\|R) \} \\
    -\lambda D(P\|R)&=\inf_C \{ U(P,C)-F(R,C) \}.
\end{align*}
Also, the optimal probability distribution $P^*$ is given by
\begin{equation*}
    P^*(B) = \frac{\int_{B}\exp(-C(x)/\lambda)R(dx)}{\int_{\mathcal{X}} \exp(-C(x)/\lambda)R(dx)}, \quad \forall B \in \mathcal{B}(\mathcal{X}).
\end{equation*}
See \cite{boue1998variational,theodorou2012relative} for further discussions.

\subsection{Proof of Theorem 1}
\label{appendix:c}
By Bellman's optimality principle, the value function satisfies the following recursive relationship:
\begin{equation}\label{eq:kl_dc_bellman1}
J_t(x_t)\!=\!\!\!\!\!\inf_{Q_{U_t|X_t}}\! \int_{\mathcal{U}_t}\!\!\! \left\{\!\rho_t(x_t, u_t)\! +\!\lambda \log \frac{dQ}{dR}(u_t|x_t)\!\right\}\!Q(du_t|x_t) 
\end{equation}
where $\rho_t(x_t, u_t)$ is defined by \eqref{eq:rho_dc_def}. Invoking the Legendre duality between the KL divergence and free energy (see Appendix B), it can be shown that there exists a minimizer $Q^*_{U_t|X_t}$ of the right-hand side of \eqref{eq:kl_dc_bellman1}, which can be written as 
\begin{equation}\label{eq:Q_bellman}
Q_{U_t|X_t}^*(B_{U_t}|x_t)\!=\!\frac{\int_{B_{U_t}}\!\!\!\exp(-\rho_t(x_t, u_t)/\lambda)R(du_t|x_t)}{\int_{\mathcal{U}_t} \exp(-\rho_t(x_t, u_t)/\lambda)R(du_t|x_t)}
\end{equation}
where $B_{U_t}$ is a Borel set belonging to the $\sigma-$algebra $\mathcal{B}(\mathcal{U}_t)$. Using \eqref{eq:Q_bellman}, the value of \eqref{eq:kl_dc_bellman1} can be computed as
\begin{equation}
\label{eq:v_rho_bellman}
\!\!J_t(x_t)\!=\!-\lambda \log \left\{ \int_{\mathcal{U}_t}\!\! \exp\left(-\frac{\rho_t(x_t, u_t)}{\lambda}\right)\!R(du_t|x_t) \right\}.
\end{equation}
Substituting \eqref{eq:rho_dc_def} into \eqref{eq:v_rho_bellman}, we obtain the recursive expression \eqref{eq:v_bellman}.
\subsection{Proof of Theorem 2}
\label{appendix:d}
Let $\mathcal{I}_{B_{U_t}}$ be the set of indices of sample paths for which an action in $B_{U_t}$ is taken at time step $t$, i.e., $\mathcal{I}_{B_{U_t}}=\{i\in \{1, 2, \ldots , N\} | u_t(i)\in B_{U_t}\}$. Define the sum of the exponentiated path costs of the paths in $\mathcal{I}_{B_{U_t}}$ as
$r_{B_{U_t}}=\sum_{i\in\mathcal{I}_{B_{U_t}}} r_t(i).$
By construction of Algorithm \ref{Algo: Q*},
\begin{equation}\label{eq:rB/r}
    \Pr\{u_t \in B_{U_t} |\{x_{t:T}(i),  u_{t:T-1}(i)\}_{i=1}^N \}=\frac{r_{B_{U_t}}}{r_t}.
\end{equation}
Now, from \eqref{eq:mc_dc_phi_convergence}, as $N\rightarrow \infty$, we get
\[
\frac{r_t}{N}=\frac{1}{N}\sum_{i=1}^N \exp\left(-\frac{C_{t:T}(x_{t:T}(i), u_{t:T-1}(i))}{\lambda}\right)\overset{a.s.}{\rightarrow} Z_t(x_t). 
\]
Similarly, as $N\rightarrow \infty$,
\begin{align*}
&\frac{r_{B_{U_t}}}{N}=\frac{1}{N}\sum_{i\in\mathcal{I}_{B_{U_t}}} \exp\left(-\frac{C_{t:T}(x_{t:T}(i), u_{t:T-1}(i))}{\lambda}\right) \\
&\overset{a.s.}{\rightarrow} \int_{\{\mathcal{X}_{t+1:T},\; \mathcal{U}_{t:T-1}|u_t\in B_{U_t}\}} \exp\left(-\frac{C_{t:T}(x_{t:T},  u_{t:T-1})}{\lambda}\right)  \\
&\hspace{27ex}\times R(dx_{t+1:T},du_{t:T-1}|x_t)
\end{align*}
Thus, from \eqref{eq:p_star_dc_linear_path_cost3} and \eqref{eq:rB/r}
\begin{align*}
&\Pr\{u_{t}\in B_{U_t} |\{x_{t:T}(i),  u_{t:T-1}(i)\}_{i=1}^N\}\\
&\overset{a.s.}{\rightarrow}\!\! \frac{1}{Z_t(x_t)}\!\!\int_{\{\mathcal{X}_{t+1:T},\; \mathcal{U}_{t:T-1}|u_t\in B_{U_t}\}} \!\!\!\!\!\!\!\!\!\!\! \exp\left(-\frac{C_{t:T}(x_{t:T},  u_{t:T-1})}{\lambda}\right)  \\
&\hspace{28ex}\times R(dx_{t+1:T}, du_{t:T-1}|x_t)\\
&=Q_{U_t|X_t}^*(B_{U_t}|x_t).
\end{align*} 


\bibliographystyle{IEEEtran}
\bibliography{bibliography}

\end{document}